\documentclass[11pt]{article}
\usepackage{amsmath,amssymb,amsthm}
\usepackage{mathtools}
\usepackage{geometry}
\usepackage{hyperref}
\usepackage{enumitem}
\theoremstyle{plain}
\newtheorem{lemma}{Lemma}
\newtheorem{theorem}{Theorem}
\newtheorem{corollary}{Corollary}
\theoremstyle{definition}
\newtheorem{definition}{Definition}
\newtheorem{remark}{Remark}
\newcommand{\F}{\mathbb{F}}
\newcommand{\supp}{\mathrm{supp}}
\newcommand{\wt}{\mathrm{wt}}

\title{\textbf{A Symplectic Proof of the Quantum Singleton Bound}}
\author{Frederick Dehmel\thanks{Department of Electrical Engineering and Computer Sciences, University of California, Berkeley, Berkeley, CA 94720, USA. Email: \texttt{dehmelf@berkeley.edu}} \qquad Shilun Li\thanks{Department of Mathematics, University of California, Berkeley, Berkeley, CA 94709, USA. Email: \texttt{shilun@berkeley.edu}}}
\date{}

\begin{document}
\maketitle

\begin{abstract}
We give a self-contained symplectic linear-algebraic derivation of the Quantum Singleton Bound
for stabiliser quantum error-correcting codes, and---our primary contribution---a complete
\texttt{Lean4} formalisation of that derivation.  Working in the language of finite-dimensional
symplectic vector spaces modelling Pauli operators, we assemble three ingredients that are
standard in the stabiliser formalism---distance-based erasure correctability, the cleaning
lemma of Bravyi and Terhal, and a dimension count---into a purely algebraic proof of the bound
$k + 2(d-1) \le n$ for any $[[n,k,d]]$ stabiliser code.  The argument invokes neither
von~Neumann entropy nor the no-cloning theorem, which makes it particularly well-suited to
mechanisation: built on \texttt{Mathlib}, the formalisation is, to our knowledge, the first
machine-checked proof of the Quantum Singleton Bound.  Our aim is thus not a new bound but a
streamlined, entropy-free presentation of a known argument together with its formal
verification.
\end{abstract}
\newpage
\tableofcontents
\newpage
\section{Introduction}\label{sec:introduction}

Quantum error-correcting codes are essential for protecting quantum information against
decoherence and operational noise, a prerequisite for scalable fault-tolerant quantum
computation~\cite{KnillLaflamme1997,Gottesman1997,CalderbankShor1996,Steane1996}.  Among the
most fundamental constraints on such codes is the \emph{Quantum Singleton Bound}: for an
$[[n,k,d]]$ stabiliser code,
\[
  k + 2(d-1) \le n.
\]
This is the quantum analogue of the classical Singleton bound $k + d - 1 \le n$ for linear codes
over finite fields~\cite{Singleton1964,MacWilliamsSloane1977}, and it places a hard limit on
the trade-off between the number of encoded logical qubits~$k$, the code distance~$d$, and
the block length~$n$.  Codes that achieve equality are called \emph{quantum maximum-distance
separable} (MDS) codes and have been the subject of extensive
study~\cite{GrasslBeth1999,RainsMDS}.

The bound was first established by Knill and Laflamme~\cite{KnillLaflamme1997} and
independently by Bennett, DiVincenzo, Smolin, and
Wootters~\cite{BennettDiVincenzoSmolinWootters1996} in the context of general quantum
error-correcting codes.  Their arguments rely on properties of quantum states and channels,
and the standard textbook proof~\cite{NielsenChuang} proceeds via von~Neumann entropy
inequalities: one shows that if a code can correct erasures of $d-1$ qubits, then the encoded
information must be recoverable from the remaining $n-(d-1)$ qubits, and an entropy argument
constrains how much information can be stored.  More recently, Grassl, Huber, and
Winter~\cite{GrasslHuberWinter2022} gave a streamlined entropic proof and extended the
technique to entanglement-assisted and catalytic codes.

For \emph{stabiliser} codes, however, the underlying mathematical structure is fundamentally
algebraic, not information-theoretic.  The $n$-qudit Pauli group (modulo phases) is naturally
identified with a $2n$-dimensional vector space over a prime field~$\F_p$, equipped with a
symplectic bilinear form that encodes commutation
relations~\cite{CalderbankRainsShorSloane1997,CalderbankRainsShorSloane1998,Gottesman1997}.
A stabiliser code is then specified by an isotropic subspace of this symplectic space, and
concepts such as error detection, correction, and code distance admit clean algebraic
formulations.  A natural question is therefore whether the Quantum Singleton Bound can be
derived purely within this symplectic framework, without invoking entropy or the theory of
quantum channels.

In this paper we give such a proof.  The argument combines three ingredients, each of which is
standard in the stabiliser literature but which together yield a self-contained and concise
derivation:
\begin{enumerate}[label=(\roman*)]
  \item \textbf{Distance implies erasure correctability.}  If a stabiliser code has distance
        $d$, then any set of at most $d-1$ qudit positions forms a correctable erasure, which
        in the symplectic model is captured by the inclusion $S^\perp \cap V_E \subseteq S$.
  \item \textbf{The cleaning lemma.}  For any partition of qudit positions into a set $M$ and
        its complement $M^c$, the dimensions of the spaces of ``supportable'' logical operators
        on $M$ and $M^c$ satisfy $g(M) + g(M^c) = 2k$.  This complementarity is the
        dimension-counting form of the cleaning lemma of Bravyi and
        Terhal~\cite{BravyiTerhal2009}, as presented in Preskill's lecture
        notes~\cite{Preskill2022TopologicalStabilizerCodes}; it captures how logical
        information can be ``cleaned'' off a correctable region.
  \item \textbf{A dimension argument.}  Given two disjoint correctable sets $A$ and $B$ with
        $|A|=|B|=d-1$, the cleaning lemma forces all $2k$ independent logical operators to be
        representable on the complement $C = [n] \setminus (A \cup B)$, yielding
        $k \le |C| = n - 2(d-1)$.
\end{enumerate}

The resulting proof is short, elementary, and entirely algebraic.  It avoids
the use of von~Neumann entropy, the no-cloning theorem, and the decoupling
principle---tools that, while powerful, are not intrinsic to the stabiliser formalism.

\paragraph{Formal verification.}
A secondary contribution of this work is a complete formalisation of the above argument in the
\texttt{Lean4} proof assistant, using the \texttt{Mathlib} library for finite-dimensional
linear algebra.  The formalisation covers the symplectic space model, the definition of
stabiliser codes, the cleaning lemma, and the final bound, totalling approximately 1300 lines
of \texttt{Lean4} code.  This is, to our knowledge, the first machine-checked proof of the
Quantum Singleton Bound.

Formal verification of results in quantum information theory remains relatively rare.  Most
existing work in this area has focused on the verification of quantum programs and circuits
rather than on coding-theoretic bounds.  Tools such as
SQIR~\cite{SQIR2021}, QHLProver~\cite{QHLProver2019}, CoqQ~\cite{CoqQ2023}, and the recent
Veri-QEC framework~\cite{VeriQEC2025} target operational verification---checking that a given
circuit implements the intended unitary or that a specific error-correction procedure is
correct.  By contrast, our formalisation concerns a structural impossibility result about the
parameters of any stabiliser code, which is a qualitatively different kind of claim.  The
linear-algebraic character of the proof makes it a natural candidate for mechanisation:
the key steps are subspace dimension calculations and inclusions, which interact well with the
finite-dimensional linear algebra API available in \texttt{Mathlib}.  An entropy-based proof
would require formalising the von~Neumann entropy and its subadditivity properties, a
substantially heavier undertaking.

\section{Related work}\label{sec:related}

\paragraph{Proofs of the Quantum Singleton Bound.}
The Quantum Singleton Bound was first proved by Knill and
Laflamme~\cite{KnillLaflamme1997}, who established the general theory of quantum
error-correcting codes and the conditions under which errors can be corrected.  The bound also
appears in work of Rains~\cite{Rains1999}, who proved it (along with other bounds) using
quantum weight enumerators and a linear programming approach.  The textbook treatment in
Nielsen and Chuang~\cite{NielsenChuang} derives the bound from the no-cloning theorem and
entropy considerations.  More recently, Grassl, Huber, and
Winter~\cite{GrasslHuberWinter2022} provided a clean entropic proof and extended it to
entanglement-assisted quantum error-correcting codes (EAQECCs) and catalytic codes (CQECCs),
correcting a previously conjectured generalisation that turned out to be
false~\cite{GrasslCounterexample2021}.  Klappenecker and
Sarvepalli~\cite{KlappeneckerSarvepalli2008} proved that no $\F_q$-linear subsystem code over
a prime field can beat the Quantum Singleton Bound.

\paragraph{The stabiliser formalism and symplectic geometry.}
The symplectic vector space model of the Pauli group was introduced by Calderbank, Rains,
Shor, and Sloane~\cite{CalderbankRainsShorSloane1997,CalderbankRainsShorSloane1998} and
independently by Gottesman~\cite{Gottesman1997}.  In this framework, an $n$-qudit stabiliser
code over $\F_p$ is described by an isotropic subspace of $\F_p^{2n}$ with respect to the
standard symplectic form, and code properties such as distance and error correction translate
into statements about subspace intersections and dimensions.  The framework extends naturally
to nonbinary (qudit) codes~\cite{Ashikhmin2001,KetkarKlappeneckerKumarSarvepalli2006} and
has been used to study the structure of logical operators~\cite{Wilde2009} and the synthesis
of Clifford circuits via symplectic transformations~\cite{Rengaswamy2018}.

\paragraph{The cleaning lemma.}
The cleaning lemma, which asserts that logical operators on a correctable region can be
``cleaned off'' to the complementary region, was first introduced by Bravyi and
Terhal~\cite{BravyiTerhal2009} in their no-go theorem for two-dimensional self-correcting
stabiliser memories, and was combined with entropic arguments by Bravyi, Poulin, and
Terhal~\cite{BravyiPoulinTerhal2010} to derive locality-constrained tradeoffs.  It has since
become a central tool in the study of how geometric locality and connectivity constrain code
parameters~\cite{BaspinGuruswamiKrishnaLi2023,DaiLi2024}.  A dimension-counting form is
presented in Preskill's lecture notes~\cite{Preskill2022TopologicalStabilizerCodes}, and
Kalachev and Sadov~\cite{KalachevSadov2022} gave a systematic linear-algebraic treatment,
showing that the cleaning lemma is at its core a fact about inner product spaces that admits a
lattice-theoretic abstraction.  The version we use here follows the dimension-counting
formulation, cast in symplectic language.

\paragraph{Formal verification of quantum information.}
Formal verification of quantum computing results has seen growing interest.  Hietala et
al.~\cite{SQIR2021} developed SQIR, a small quantum intermediate representation formalised in
Coq, and used it to verify properties of quantum circuits including Grover's algorithm.  Liu
et al.~\cite{QHLProver2019} formalised quantum Hoare logic in Isabelle/HOL.  Zhou et
al.~\cite{CoqQ2023} built CoqQ, a foundational Coq framework for verifying quantum programs
based on the quantum-while language.  Most recently, Huang et
al.~\cite{VeriQEC2025} proposed Veri-QEC, which formalises a program logic for quantum error
correction in Coq and provides automated verification of fault-tolerant protocols.  These
efforts all target the verification of \emph{quantum programs} or \emph{circuits}.  Our
contribution is complementary: we formalise a \emph{coding-theoretic bound}, which is a
statement about the parameter space of all possible stabiliser codes rather than about any
specific code or circuit.

\section{Contributions}\label{sec:contributions}

Concretely, our contributions are as follows:
\begin{enumerate}[label=(\roman*)]
  \item \emph{(Primary contribution.)}  We provide, to our knowledge, the first machine-checked
        proof of the Quantum Singleton Bound: a complete \texttt{Lean4} formalisation, built on
        \texttt{Mathlib}, covering the full linear-algebraic development---the symplectic form,
        stabiliser codes, erasure correctability, the cleaning dimension identity, and the main
        theorem.
  \item In support of the formalisation, we isolate a self-contained symplectic
        linear-algebraic derivation of the bound $k + 2(d-1) \le n$ from distance-based erasure
        correctability and the cleaning lemma via a dimension argument, using neither entropy
        nor the no-cloning theorem.  The individual ingredients are standard in the stabiliser
        literature; the contribution here is a streamlined, entropy-free and
        formalisation-ready presentation rather than a new bound.
\end{enumerate}

\section{Conventions and symplectic model}\label{sec:symplectic}

\begin{remark}[Field convention]
Throughout, $\F$ denotes a prime field $\F_p$.  We work in the symplectic vector space
\[
  V := \F^n \times \F^n \cong \F^{2n}
\]
with elements written $v = (v_X, v_Z)$.  In the stabiliser formalism, a vector
$v = (v_X, v_Z) \in V$ represents (the phase-free part of) the $n$-qudit Pauli operator
$X^{v_X} Z^{v_Z} = \prod_{i=1}^n X_i^{v_X(i)} Z_i^{v_Z(i)}$, where $X_i$ and $Z_i$ are the
generalised Pauli operators on the $i$-th qudit.  Two Pauli operators commute if and only if
the symplectic pairing of their corresponding vectors vanishes.
\end{remark}

\begin{definition}[Standard symplectic form]
For $u = (u_X, u_Z)$ and $v = (v_X, v_Z)$ in $V$, define
\[
  \langle u, v \rangle := \sum_{i=1}^n \bigl( u_X(i) v_Z(i) - u_Z(i) v_X(i) \bigr) \in \F.
\]
This is a bilinear, alternating (and hence antisymmetric) form on $V$.
\end{definition}

\begin{lemma}[Nondegeneracy]\label{lem:nondeg}
If $u \in V$ satisfies $\langle u, v \rangle = 0$ for all $v \in V$, then $u = 0$.
\end{lemma}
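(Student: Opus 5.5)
The plan is to test $u$ against the standard basis vectors of $V$ and read off its coordinates one at a time. For $j \in \{1,\dots,n\}$, let $e_j \in \F^n$ denote the $j$-th standard basis vector. We use the test vectors $(e_j, 0)$ and $(0, e_j)$ in $V$. Since nondegeneracy is a coordinatewise statement, nothing beyond the definition of the form is needed.

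First, take $v = (0, e_j)$, so that $v_X = 0$ and $v_Z = e_j$. In the sum defining $\langle u, v\rangle$, the term $u_Z(i) v_X(i)$ vanishes for every $i$. The term $u_X(i) v_Z(i)$ survives only at $i=j$. Hence
\[
  0 = \langle u, (0,e_j)\rangle = u_X(j),
\]
and so $u_X(j) = 0$.

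Next, take $v = (e_j, 0)$. By the symmetric computation,
\[
  0 = \langle u, (e_j,0)\rangle = -u_Z(j),
\]
so $u_Z(j) = 0$. Since $j$ was arbitrary, $u_X = 0$ and $u_Z = 0$, and therefore $u = 0$.

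The only point requiring care is the evaluation of the finite sum. We have $\sum_i u_X(i)\,\delta_{ij} = u_X(j)$, and the sign in the second computation is harmless, since $-a = 0$ implies $a = 0$ in any field. In the \texttt{Lean4} setting, this step is a \texttt{Finset.sum\_ite\_eq}-style simplification with \texttt{Pi.single}. I expect this bookkeeping to be the main, though minor, obstacle; the mathematical content is immediate.
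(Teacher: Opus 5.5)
Your proposal is correct and matches the paper's argument: pairing $u$ with $(0,e_j)$ gives $u_X(j)=0$, pairing with $(e_j,0)$ gives $-u_Z(j)=0$, and since $j$ is arbitrary all coordinates of $u$ vanish. Your remark that the sign is harmless is the only extra detail, and it is fine.
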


\begin{proof}
Fix an index $i \in [n]$.  Pairing $u$ with $(0, e_i)$ gives $u_X(i) = 0$, and pairing $u$
with $(e_i, 0)$ gives $-u_Z(i) = 0$.  Since $i$ was arbitrary, all coordinates of $u$ vanish.
\end{proof}

Nondegeneracy ensures that the symplectic form identifies $V$ with its dual, which is the
basis for the dimension identities used throughout the proof.

\section{Support, restriction, and distance}\label{sec:support}

Let $[n] := \{1, \dots, n\}$.

\begin{definition}[Support and weight]
For $v \in V$, define
\[
  \supp(v) := \{ i \in [n] : (v_X(i), v_Z(i)) \neq (0,0) \},
  \qquad
  \wt(v) := |\supp(v)|.
\]
Intuitively, $\supp(v)$ is the set of qudit positions on which the corresponding Pauli
operator acts nontrivially.
\end{definition}

\begin{definition}[Support subspace]
For $C \subseteq [n]$, define
\[
  V_C := \{ v \in V : v_X(i) = v_Z(i) = 0 \text{ for all } i \notin C \}.
\]
This is the subspace of $V$ consisting of vectors whose support is contained in $C$.
\end{definition}

\begin{lemma}[Dimension of a support subspace]\label{lem:dimVC}
$\dim(V_C) = 2|C|$.
\end{lemma}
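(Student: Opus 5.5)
The plan is to exhibit an explicit basis of $V_C$ and count it. For $i \in [n]$ let $e_i \in \F^n$ be the $i$-th standard basis vector. Consider the family
\[
  \mathcal{B}_C := \{ (e_i, 0) : i \in C \} \cup \{ (0, e_i) : i \in C \}.
\]
These $2|C|$ vectors are pairwise distinct, since the $X$-type and $Z$-type vectors are nonzero in different blocks. Each of them vanishes in both coordinates at every $i \notin C$, so $\mathcal{B}_C \subseteq V_C$.

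The argument then has two steps. First, $\mathcal{B}_C$ spans $V_C$: any $v \in V_C$ has $v_X(i) = v_Z(i) = 0$ for $i \notin C$, so
\[
  v = \sum_{i \in C} v_X(i)\,(e_i,0) + \sum_{i \in C} v_Z(i)\,(0,e_i).
\]
Second, $\mathcal{B}_C$ is linearly independent. It is a subfamily of the standard basis $\{(e_i,0)\}_{i\in[n]} \cup \{(0,e_i)\}_{i\in[n]}$ of $V = \F^n \times \F^n$. Alternatively, if a linear combination vanishes, reading off the $X$- and $Z$-coordinates at each $i \in C$ shows every coefficient is zero. Hence $\mathcal{B}_C$ is a basis of $V_C$, and $\dim V_C = |\mathcal{B}_C| = 2|C|$.

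An equivalent route, which may be cleaner in the \texttt{Lean4}/\texttt{Mathlib} setting, is an isomorphism argument. The restriction map $v \mapsto (v_X|_C, v_Z|_C)$ is a linear isomorphism $V_C \to \F^C \times \F^C$, with inverse given by extension by zero. Then $\dim V_C = \dim(\F^C) + \dim(\F^C) = 2|C|$.

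Neither route has a genuine mathematical obstacle. The only point needing care is bookkeeping: confirming that the $X$- and $Z$-families are disjoint, so the cardinality of $\mathcal{B}_C$ is exactly $2|C|$ and not less. In a formal setting one would instead exhibit $V_C$ as the product of two copies of the coordinate subspace $\{x \in \F^n : x_i = 0 \ \forall i \notin C\}$. One would then use the standard fact that this coordinate subspace has dimension $|C|$.
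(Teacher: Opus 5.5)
Your proposal is correct and matches the paper's proof, which argues exactly via your second route: restriction to the coordinates in $C$ gives an isomorphism $V_C \cong \F^{|C|} \times \F^{|C|}$. Your explicit-basis argument is the same isomorphism written out in coordinates, since $\mathcal{B}_C$ is the pullback of the standard basis of $\F^C \times \F^C$; it adds detail but no new idea.
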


\begin{proof}
Restriction to the coordinates indexed by $C$ gives a vector space isomorphism
$V_C \cong \F^{|C|} \times \F^{|C|}$.
\end{proof}

\begin{definition}[Restriction map]
For $E \subseteq [n]$, define $r_E : V \to V_E$ by zeroing all coordinates outside $E$:
$(r_E(v))_X(i) = v_X(i)$ and $(r_E(v))_Z(i) = v_Z(i)$ for $i \in E$, and zero otherwise.
\end{definition}

\begin{definition}[Stabiliser subspace and distance]\label{def:stabcode}
A stabiliser code $Q$ is specified by an isotropic subspace $S \subseteq V$
(i.e., $S \subseteq S^\perp$) with $\dim(S) = n - k$.  The (Pauli) distance is
\[
  d := \min\{ \wt(v) : v \in S^\perp \setminus S,\; v \neq 0 \}.
\]
Elements of $S$ represent stabiliser operators, and elements of $S^\perp \setminus S$
represent nontrivial logical operators.  The quotient $S^\perp / S$ is the logical operator
space.
\end{definition}

\begin{lemma}[Logical space dimension]\label{lem:logdim}
If $\dim(S) = n - k$, then $\dim(S^\perp / S) = 2k$.
\end{lemma}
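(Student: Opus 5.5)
The plan is to show $\dim(S^\perp) = 2n - \dim(S) = n + k$, and then to compute the quotient dimension as $\dim(S^\perp) - \dim(S) = (n+k) - (n-k) = 2k$. The second step is immediate: $S \subseteq S^\perp$ by isotropy, so $S$ is a subspace of $S^\perp$, and for finite-dimensional spaces $\dim(S^\perp/S) = \dim(S^\perp) - \dim(S)$. The work is therefore in the first step, the standard identity $\dim(W^\perp) = \dim(V) - \dim(W)$ for a nondegenerate form.

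To establish the identity, consider the linear map $\phi : V \to V^*$ given by $u \mapsto \langle u, \cdot\rangle$. By Lemma~\ref{lem:nondeg} it is injective. Since $\dim V = \dim V^* = 2n$, it is an isomorphism. Now compose $\phi$ with the restriction map $V^* \to W^*$, $f \mapsto f|_W$, applied to $W = S$. This restriction map is surjective: extend a basis of $S$ to a basis of $V$ and define functionals on the extra basis vectors arbitrarily. Hence the composite $\psi : V \to S^*$ is surjective. Its kernel is exactly $\{u : \langle u, s\rangle = 0 \text{ for all } s \in S\} = S^\perp$, where we use either convention for $\perp$, since the form is antisymmetric. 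Rank--nullity then gives
\[
\dim S^\perp = \dim V - \dim S^* = 2n - (n-k) = n+k.
\]

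The main obstacle is the surjectivity of $\psi$, which is where nondegeneracy is genuinely used; everything else is bookkeeping. As an alternative that avoids dual spaces, one could pick a basis $s_1,\dots,s_{n-k}$ of $S$ and view $S^\perp$ as the kernel of the $(n-k)\times 2n$ matrix whose rows are the vectors $\phi(s_j)$ written in coordinates. These are the rows $(-(s_j)_Z, (s_j)_X)$. They are linearly independent because $\phi$ is injective, so the matrix has rank $n-k$, and the kernel has dimension $2n-(n-k)$. This matrix version is perhaps the cleanest for a written argument, and it mirrors what Mathlib's \texttt{finrank} lemmas for orthogonal complements under nondegenerate bilinear forms provide.
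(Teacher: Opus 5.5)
Your proposal is correct and follows essentially the same route as the paper: nondegeneracy (Lemma~\ref{lem:nondeg}) yields $\dim(S) + \dim(S^\perp) = 2n$, and isotropy $S \subseteq S^\perp$ then gives $\dim(S^\perp/S) = (n+k)-(n-k) = 2k$. The only difference is that you spell out the duality and rank--nullity argument for $\dim(S^\perp) = 2n - \dim(S)$, whereas the paper simply invokes it as a standard consequence of nondegeneracy.
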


\begin{proof}
Nondegeneracy of the symplectic form gives $\dim(S) + \dim(S^\perp) = 2n$, so
$\dim(S^\perp) = n + k$.  Therefore $\dim(S^\perp / S) = (n + k) - (n - k) = 2k$.
\end{proof}

\section{Erasure correctability}\label{sec:erasure}

\begin{definition}[Correctable erasure (commutant form)]\label{def:correctable}
A subset $E \subseteq [n]$ is correctable if
\[
  S^\perp \cap V_E \subseteq S.
\]
In the stabiliser setting, this condition is equivalent to the standard Knill--Laflamme
error-correction conditions for the erasure channel that traces out the qudits in
$E$~\cite{KnillLaflamme1997,NielsenChuang}.  Concretely, $E$ is correctable if and only if no
nontrivial logical operator is supported entirely within $E$: every element of $S^\perp$ that
is supported on $E$ must already lie in the stabiliser $S$.
\end{definition}

\begin{lemma}[Distance implies erasure correctability]\label{lem:dist-correct}
If $|E| \le d - 1$, then $E$ is correctable.
\end{lemma}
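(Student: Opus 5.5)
The plan is to argue directly from Definition~\ref{def:stabcode} by contradiction (equivalently, by contrapositive). Fix $E \subseteq [n]$ with $|E| \le d-1$, and take an arbitrary $v \in S^\perp \cap V_E$. The goal is to show $v \in S$. Suppose instead that $v \notin S$. Then $v \in S^\perp \setminus S$. Moreover $v \neq 0$, since $0 \in S$ because $S$ is a subspace. Hence $v$ belongs to the set over which the minimum defining $d$ is taken, and so $\wt(v) \ge d$.

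On the other hand, $v \in V_E$ means that $v_X(i) = v_Z(i) = 0$ for every $i \notin E$. By the definition of support, this gives $\supp(v) \subseteq E$, and therefore $\wt(v) = |\supp(v)| \le |E| \le d-1 < d$. This contradicts $\wt(v) \ge d$. We conclude that $v \in S$, which gives the inclusion $S^\perp \cap V_E \subseteq S$. That inclusion is exactly correctability in the sense of Definition~\ref{def:correctable}.

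The only point needing care is the degenerate situation in which $S^\perp \setminus S$ is empty, i.e.\ $k = 0$. In that case the minimum defining $d$ is over an empty set, and the convention for $d$ must be fixed; the natural choices are $d = \infty$ or $d = n+1$. The argument above never uses the value of $d$ in this case. Indeed, if $S^\perp \setminus S = \emptyset$, then $S^\perp = S$, and every $v \in S^\perp \cap V_E$ lies in $S$ automatically. So I will split into this trivial case and the main case, where the minimum is attained and $\wt(v) \ge d$ holds for every nonzero $v \in S^\perp \setminus S$.

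The key steps in order are therefore as follows. First, handle the empty-minimum case. Second, observe that $\supp(v) \subseteq E$ for every $v \in V_E$, so $\wt(v) \le |E|$. Third, apply the minimality of $d$ to any putative $v \in (S^\perp \cap V_E) \setminus S$ to reach the contradiction. None of these steps is a real obstacle. In the formalisation, the main friction is expected to be bookkeeping: relating the finset cardinality of the support to $|E|$, and dealing with the truncated subtraction $d-1$ over $\mathbb{N}$, which needs $d \ge 1$. That condition holds automatically in the main case, since $d$ is then the weight of a nonzero vector and so $d \ge 1$.
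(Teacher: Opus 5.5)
Your proof is correct and is essentially the paper's argument: assume some $v \in (S^\perp \cap V_E) \setminus S$, observe $\wt(v) \le |E| \le d-1$, and contradict the minimality that defines $d$. Your explicit checks that $v \neq 0$ and that the degenerate case $S^\perp = S$ (empty minimum) is trivial fill in details the paper leaves implicit.
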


\begin{proof}
Suppose $v \in S^\perp \cap V_E$ with $v \notin S$.  Then $v \in S^\perp \setminus S$ and
$\wt(v) \le |E| \le d - 1$, contradicting the definition of~$d$.
\end{proof}

\section{Cleaning lemma and the dimension identity}\label{sec:cleaning}

The cleaning lemma captures a remarkable complementarity property of stabiliser codes: logical
operators that ``live'' on a subset $M$ of qudit positions and those that live on the
complementary set $M^c$ together account for all $2k$ independent logical degrees of freedom.

\begin{definition}[Supportable logical operators]
For $M \subseteq [n]$, define
\[
  g(M) := \dim\bigl( (S^\perp \cap V_M) / (S \cap V_M) \bigr).
\]
This counts the number of independent logical Pauli operators (modulo stabilisers) that can be
supported entirely on $M$.
\end{definition}

\begin{lemma}[Cleaning dimension identity]\label{lem:cleaning}
For any $M \subseteq [n]$,
\[
  g(M) + g(M^c) = 2k.
\]
\end{lemma}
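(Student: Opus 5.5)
The plan is to compute $g(M)$ and $g(M^c)$ separately by pure dimension counting and add the two expressions. The terms that cannot be computed in closed form, namely $\dim(S\cap V_M)$ and $\dim(S\cap V_{M^c})$, will turn out to cancel. Write $s(C) := \dim(S \cap V_C)$. Since $S\cap V_M \subseteq S^\perp\cap V_M$ (isotropy), we have
\[
  g(M) = \dim(S^\perp \cap V_M) - s(M).
\]
So the only real work is to express $\dim(S^\perp\cap V_M)$ in terms of $s(M^c)$.

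\textbf{Step 1 (three linear-algebra facts).} I will first establish the following.
\begin{enumerate}[label=(\alph*)]
  \item For every subspace $W \subseteq V$, $\dim W + \dim W^\perp = 2n$. This is the same consequence of Lemma~\ref{lem:nondeg} used in Lemma~\ref{lem:logdim}: the map $V \to W^*$, $v\mapsto \langle v,\cdot\rangle|_W$, is surjective because the form identifies $V$ with $V^*$, and its kernel is $W^\perp$.
  \item For subspaces $A,B$, $(A+B)^\perp = A^\perp \cap B^\perp$. This is immediate from bilinearity.
  \item $V_{M^c}^\perp = V_M$. For the inclusion $\supseteq$, vectors with disjoint supports pair to zero coordinatewise. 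For $\subseteq$, pair $u$ with $(0,e_i)$ and $(e_i,0)$ for $i\notin M$, exactly as in the proof of Lemma~\ref{lem:nondeg}, to conclude $u_X(i)=u_Z(i)=0$.
\end{enumerate}

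\textbf{Step 2 (the computation).} By (b) and (c),
\[
  S^\perp\cap V_M = S^\perp \cap V_{M^c}^\perp = (S + V_{M^c})^\perp.
\]
Applying (a), then the Grassmann formula, and then Lemma~\ref{lem:dimVC} together with $\dim S = n-k$, gives
\[
  \dim(S^\perp\cap V_M) = 2n - \dim(S+V_{M^c}) = 2n - (n-k) - 2|M^c| + s(M^c).
\]
Hence
\[
  g(M) = n + k - 2|M^c| + s(M^c) - s(M).
\]
Exchanging the roles of $M$ and $M^c$ gives
\[
  g(M^c) = n+k-2|M| + s(M) - s(M^c).
\]
Adding the two, the $s$-terms cancel and $|M|+|M^c|=n$, so the sum is $2n+2k-2n = 2k$.

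\textbf{Main obstacle.} I expect the main obstacle to be fact (a) for an arbitrary subspace $W$, as opposed to the special case already used in Lemma~\ref{lem:logdim}. This is where nondegeneracy must be upgraded to the statement that $v\mapsto\langle v,\cdot\rangle$ is an isomorphism $V\cong V^*$ and combined with rank–nullity for the restriction $V^*\to W^*$. Fact (c) is a routine coordinate check but needs care with the support bookkeeping. Everything after Step 1 is arithmetic.
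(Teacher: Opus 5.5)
Your proof is correct. It reaches the same per-set formula as the paper: the paper's $g(M) = 2|M| - 2\dim S_M - \dim S_0$ becomes your $n+k-2|M^c| + s(M^c) - s(M)$ once you substitute $\dim S_0 = (n-k) - s(M) - s(M^c)$. The mechanism, however, is different.

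The paper chooses a complement $S_0$ with $S = S_M \oplus S_{M^c} \oplus S_0$ and proves that $r_M$ is injective on $S_M \oplus S_0$. It then counts $S^\perp \cap V_M$ as the annihilator \emph{inside} $V_M$ of $r_M(S)$, which tacitly uses nondegeneracy of the form restricted to $V_M$.

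You stay in the ambient space instead. The coordinate identity $V_{M^c}^\perp = V_M$ rewrites $S^\perp \cap V_M$ as the single complement $(S+V_{M^c})^\perp$. The global identity $\dim W + \dim W^\perp = 2n$ then finishes the job together with Grassmann's formula. That identity is the same fact the paper invokes for $W=S$ in Lemma~\ref{lem:logdim}, and your justification of it is fine.

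Underneath, the two counts coincide. Since $S + V_{M^c} = r_M(S) \oplus V_{M^c}$, Grassmann's formula for $S + V_{M^c}$ is rank--nullity for $r_M|_S$ with kernel $S \cap V_{M^c}$. That is exactly the content of the paper's injectivity claim.

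Your route gives a choice-free argument that needs only the global nondegeneracy of Lemma~\ref{lem:nondeg}. The terms $s(M)$ and $s(M^c)$ cancel symmetrically, and no $S_0$ ever appears. This is plausibly lighter to mechanise, since the paper singles out the direct-sum bookkeeping as the hardest part of its formalisation. The paper's route, in exchange, keeps the constraint-counting picture of Preskill's notes: a basis of $S_M \oplus S_0$ supplies the independent commutation constraints on operators supported on $M$.
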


\begin{proof}
The following argument is a linear-algebraic reformulation of the dimension-counting proof
appearing in Preskill's notes~\cite{Preskill2022TopologicalStabilizerCodes}.

Let $P \cong V$ denote the abelianised Pauli group viewed as a $2n$-dimensional $\F$-vector
space with the same symplectic form.  Fix $M \subseteq [n]$ and write $M^c$ for its
complement.  Let $P_M := V_M$ and $P_{M^c} := V_{M^c}$.

Define the stabiliser-restriction subspaces
\[
  S_M := S \cap P_M, \qquad S_{M^c} := S \cap P_{M^c}.
\]
Choose a complementary subspace $S_0$ such that
\[
  S = S_M \oplus S_{M^c} \oplus S_0.
\]

Consider the restriction map $r_M$ to $M$.  We claim that the restriction of
$S_M \oplus S_0$ to $M$ is injective:
\[
  \dim\bigl( S_M \oplus r_M(S_0) \bigr) = \dim(S_M) + \dim(S_0).
\]
Indeed, if a nontrivial linear combination of elements of $S_M \oplus S_0$ restricts to zero
on $M$, then it is supported on $M^c$, hence lies in $S \cap P_{M^c} = S_{M^c}$, which is
impossible unless the combination was trivial (since $S_M \oplus S_{M^c} \oplus S_0$ is a
direct sum).

Now compute $\dim(S^\perp \cap P_M)$.  Since $P_M$ is symplectically orthogonal to $S_{M^c}$
(their supports are disjoint), the constraints for a vector in $P_M$ to commute with all of
$S$ are exactly orthogonality to $r_M(S_M \oplus S_0) = S_M \oplus r_M(S_0)$.  Therefore,
\[
  \dim(S^\perp \cap P_M) = \dim(P_M) - \dim(S_M) - \dim(S_0) = 2|M| - \dim(S_M) - \dim(S_0).
\]
Similarly,
\[
  \dim(S^\perp \cap P_{M^c}) = 2|M^c| - \dim(S_{M^c}) - \dim(S_0).
\]

By definition,
\[
  g(M) = 2|M| - 2\dim(S_M) - \dim(S_0),
  \qquad
  g(M^c) = 2|M^c| - 2\dim(S_{M^c}) - \dim(S_0).
\]
Adding and using $|M| + |M^c| = n$ and
$\dim(S) = \dim(S_M) + \dim(S_{M^c}) + \dim(S_0) = n - k$ gives
\[
  g(M) + g(M^c) = 2n - 2\dim(S) = 2k. \qedhere
\]
\end{proof}

\begin{remark}
The identity $g(M) + g(M^c) = 2k$ does not imply that each logical Pauli operator is
supportable on exactly one of $M$ or $M^c$.  It is a statement about dimensions, not about
individual operators.
\end{remark}

\begin{corollary}[Cleaning lemma --- support on the complement]\label{cor:cleaning}
If $M$ is correctable, then every logical class $[L] \in S^\perp / S$ has a representative
supported on $M^c$.
\end{corollary}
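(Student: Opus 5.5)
The plan is to derive the corollary from the cleaning dimension identity (Lemma~\ref{lem:cleaning}) together with a dimension count on the natural map from $S^\perp \cap V_{M^c}$ into the logical space $S^\perp/S$.

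First, correctability of $M$ means $S^\perp \cap V_M \subseteq S$, so $S^\perp \cap V_M = S \cap V_M$ and hence $g(M) = 0$. Lemma~\ref{lem:cleaning} then gives $g(M^c) = 2k$, that is,
\[
  \dim\bigl( (S^\perp \cap V_{M^c}) / (S \cap V_{M^c}) \bigr) = 2k .
\]

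Next, consider the linear map
\[
  \phi : S^\perp \cap V_{M^c} \to S^\perp / S, \qquad v \mapsto [v].
\]
Its kernel is $(S^\perp \cap V_{M^c}) \cap S = S \cap V_{M^c}$. So $\phi$ induces an injective linear map
\[
  (S^\perp \cap V_{M^c}) / (S \cap V_{M^c}) \hookrightarrow S^\perp / S .
\]
The domain has dimension $2k$ by the previous step. The codomain has dimension $2k$ by Lemma~\ref{lem:logdim}. An injective linear map between finite-dimensional spaces of equal dimension is surjective, so $\phi$ is surjective. Concretely, every class $[L] \in S^\perp/S$ equals $[v]$ for some $v \in S^\perp \cap V_{M^c}$, and such a $v$ is a representative of $[L]$ supported on $M^c$.

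There is no real obstacle here. The only point needing care is identifying the kernel of $\phi$ correctly, so that the induced map on quotients is well defined and injective. In a formal setting this is a routine application of the first isomorphism theorem together with the equal-dimension injective-implies-surjective principle.
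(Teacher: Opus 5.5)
Your proposal is correct and follows the paper's proof: correctability gives $g(M)=0$, and the cleaning identity then gives $g(M^c)=2k=\dim(S^\perp/S)$. The paper stops at this equality of dimensions. You also supply the step it leaves implicit: the natural map $(S^\perp\cap V_{M^c})/(S\cap V_{M^c})\to S^\perp/S$ is injective (the kernel computation uses $S\subseteq S^\perp$), so equal dimensions force it to be surjective, which is exactly the existence of representatives supported on $M^c$.
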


\begin{proof}
If $M$ is correctable, then $S^\perp \cap V_M \subseteq S$, hence $g(M) = 0$.  By the
dimension identity, $g(M^c) = 2k = \dim(S^\perp / S)$.
\end{proof}

\section{Two disjoint correctable sets}\label{sec:two-disjoint}

The following lemma is the key step connecting the cleaning lemma to the Singleton bound.  It
shows that if we can find two disjoint correctable sets, then the number of logical qubits is
bounded by the size of the remaining positions.

\begin{lemma}[Two disjoint correctable sets bound logical dimension]\label{lem:two-disjoint}
Partition the physical qudits into $A\, B\, C$ with $A \cap B = \varnothing$ and
$C = [n] \setminus (A \cup B)$.  If erasures of both $A$ and $B$ are correctable, then
\[
  k \le |C|.
\]
\end{lemma}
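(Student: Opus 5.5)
We will prove $k \le |C|$ by combining the cleaning dimension identity (Lemma~\ref{lem:cleaning}) with the hypothesis that $B$ is correctable, and then bounding the dimension of logical operators that live on $C$ by a dimension count inside $V_C$.

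\textbf{Step 1: apply cleaning to $A$.} Take $M = A$, so $M^c = B \cup C$. Since $A$ is correctable, Corollary~\ref{cor:cleaning} (equivalently $g(A)=0$ in Lemma~\ref{lem:cleaning}) gives $g(B \cup C) = 2k$. That is, the quotient $(S^\perp \cap V_{B\cup C})/(S \cap V_{B\cup C})$ has dimension $2k$.

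\textbf{Step 2: relate $g(B\cup C)$ to $g(C)$.} Apply Lemma~\ref{lem:cleaning} again with $M = B$, so $M^c = A \cup C$. Since $B$ is correctable, $g(A\cup C) = 2k$. By symmetry we also know $g(B \cup C)=2k$. We want a region of size $|C|$ carrying enough logical information. The cleanest route is as follows. Set $W := S^\perp \cap V_{B\cup C}$ and consider the restriction $r_B : W \to V_B$. Its kernel is $S^\perp \cap V_C$. We then need to show that $k$ independent logical classes survive in $V_C$; the claim to establish is
\[
\dim\bigl( (S^\perp \cap V_C)/(S\cap V_C) \bigr) \ge 2k - 2|B| \cdot 0 \ldots
\]
This is where the argument has to be set up carefully. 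Instead, we plan the following direct argument.

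\textbf{Step 2$'$: commutation argument.} By Step 1 and the analogous statement for $B$, every logical class has a representative on $B\cup C$ and a representative on $A\cup C$. Choose logical operators $\bar X_1,\dots,\bar X_k,\bar Z_1,\dots,\bar Z_k \in S^\perp$ whose classes form a symplectic basis of $S^\perp/S$. Such a basis exists because the induced form on $S^\perp/S$ is nondegenerate, by Lemma~\ref{lem:nondeg} together with $(S^\perp)^\perp = S$. Represent each $\bar X_j$ on $A\cup C$ and each $\bar Z_j$ on $B\cup C$. Since $A\cap B=\varnothing$, the form $\langle \bar X_i, \bar Z_j\rangle$ only receives contributions from coordinates in $C$. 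Hence
\[
\langle r_C(\bar X_i), r_C(\bar Z_j)\rangle = \langle \bar X_i,\bar Z_j\rangle = \delta_{ij}.
\]
So the vectors $r_C(\bar X_1),\dots,r_C(\bar X_k)$ are linearly independent in $V_C$: any linear relation among them, paired with $r_C(\bar Z_j)$, forces the $j$-th coefficient to vanish. Moreover, they span an isotropic-irrelevant subspace of the space pairing nondegenerately with $\mathrm{span}\, r_C(\bar Z_j)$. Therefore the $2k$ vectors $r_C(\bar X_i), r_C(\bar Z_j)$ are independent: the Gram matrix of the form on them has the block shape $\begin{pmatrix} * & I\\ -I & *\end{pmatrix}$, which is invertible. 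This gives $2k \le \dim V_C = 2|C|$ by Lemma~\ref{lem:dimVC}, hence $k \le |C|$.

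\textbf{Main obstacle.} The delicate step is the existence of a symplectic basis of $S^\perp/S$, i.e.\ nondegeneracy of the induced form. This requires $(S^\perp)^\perp = S$, which follows from the dimension formula $\dim S + \dim S^\perp = 2n$ used in Lemma~\ref{lem:logdim}. If that proves awkward, a fallback is to pair only the $\bar X$'s against the $\bar Z$'s. This already needs only a basis in which the matrix $\langle \bar X_i,\bar Z_j\rangle$ is invertible, and that follows from nondegeneracy of the induced form.
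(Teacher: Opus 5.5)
There is a genuine gap in Step~2$'$, at the claim that the $2k$ vectors $r_C(\bar X_i), r_C(\bar Z_j)$ are linearly independent.

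First, the linear algebra is wrong. A matrix of the shape $\begin{pmatrix} P & I\\ -I & Q\end{pmatrix}$ has determinant $\det(I+PQ)$. This can vanish even for antisymmetric $P,Q$; take $P=Q=\begin{pmatrix}0&1\\-1&0\end{pmatrix}$. In your setting nothing controls these blocks, since $P_{il}=\langle r_C(\bar X_i),r_C(\bar X_l)\rangle=-\langle r_A(\bar X_i),r_A(\bar X_l)\rangle$, and similarly $Q$ is determined by the $B$-parts.

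Second, the conclusion itself is false, not just its justification. Take the $[[4,2,2]]$ code with $S=\mathrm{span}\{XXXX,ZZZZ\}$ (writing Pauli strings for the corresponding vectors). Let $A=\{1\}$, $B=\{2\}$, $C=\{3,4\}$; both $A$ and $B$ are correctable since $d=2$. Choose representatives $\bar X_1=X_1X_3$ and $\bar X_2=Z_1Z_3$ on $A\cup C$, and $\bar Z_1=Z_2Z_3$ and $\bar Z_2=X_2X_3$ on $B\cup C$. These form a symplectic basis of $S^\perp/S$, yet their restrictions to $C$ are $X_3,Z_3,Z_3,X_3$, which span only a $2$-dimensional space.

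The underlying reason is this. A relation $r_C(x)+r_C(z)=0$ with $x\in\mathrm{span}\{\bar X_i\}$ and $z\in\mathrm{span}\{\bar Z_j\}$ makes $x+z$ a nontrivial logical operator supported on $A\cup B$. Correctability of $A$ and of $B$ separately does not rule that out; here $\bar X_1+\bar Z_2=X_1X_2$.

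What your pairing argument does establish correctly is that the $k$ vectors $r_C(\bar X_i)$ are independent. That only gives $k\le 2|C|$, and your fallback yields the same weaker bound. The point you flag as the main obstacle, nondegeneracy of the induced form on $S^\perp/S$, is standard and is not where the difficulty lies.

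The fix is the idea you abandoned in Step~2, with $r_C$ in place of $r_B$.
\begin{itemize}
\item Since $g(B\cup C)=2k$, choose \emph{all} $2k$ representatives $v_1,\dots,v_{2k}$ of a basis of $S^\perp/S$ inside the single region $B\cup C$.
\item On $S^\perp\cap V_{B\cup C}$, the map $r_C$ has kernel $S^\perp\cap V_B$, which lies in $S$ because $B$ is correctable.
\item So a relation $\sum_m\gamma_m r_C(v_m)=0$ forces $\sum_m\gamma_m v_m\in S$, hence $\gamma=0$.
\item Therefore $2k\le\dim V_C=2|C|$.
\end{itemize}
This is the paper's proof. The paper even maps into the smaller quotient $V_C/r_C(S\cap V_{B\cup C})$, though that refinement is not needed. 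Mixing representatives from the two regions $A\cup C$ and $B\cup C$ is exactly what breaks your version.
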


\begin{proof}
Set $D := A^c = B \cup C$.  Since $A$ is correctable we have $g(A) = 0$, hence by the
cleaning dimension identity (Lemma~\ref{lem:cleaning}),
\[
  g(D) = g(A^c) = 2k.
\]
By definition,
\[
  g(D) = \dim\Bigl( (S^\perp \cap V_D) / (S \cap V_D) \Bigr),
\]
so the quotient space $L := (S^\perp \cap V_D) / (S \cap V_D)$ has dimension $\dim(L) = 2k$.

Consider the restriction map $r_C : V_D \to V_C$ (zeroing coordinates outside $C$).  Let
$W := S \cap V_D$.  Then $r_C(W) \le V_C$ is a subspace, and $r_C$ induces a linear map
\[
  \bar{r}_C : L \longrightarrow V_C / r_C(W),
  \qquad [v] \longmapsto r_C(v) \bmod r_C(W).
\]
This is well-defined because replacing $v$ by $v + w$ with $w \in W$ changes $r_C(v)$ by an
element of $r_C(W)$.

We claim $\bar{r}_C$ is injective.  Suppose $\bar{r}_C([v]) = 0$.  Then
$r_C(v) \in r_C(W)$, so there exists $w \in W$ with $r_C(v) = r_C(w)$.  Hence
$r_C(v - w) = 0$, so $v - w$ is supported entirely on $B$, i.e.,
$v - w \in V_B$.  Also $v - w \in S^\perp$ (since $v \in S^\perp$ and $w \in S$).  Thus
$v - w \in S^\perp \cap V_B$, and because $B$ is correctable we get $v - w \in S$.  Since
$w \in S$ as well, it follows that $v \in S$, hence $[v] = 0$ in $L$.

Therefore,
\[
  2k = \dim(L) \le \dim\bigl( V_C / r_C(W) \bigr) \le \dim(V_C) = 2|C|,
\]
so $k \le |C|$.
\end{proof}

\section{The Quantum Singleton Bound}\label{sec:singleton}

\begin{theorem}[Quantum Singleton Bound]\label{thm:QSB}
For any $[[n, k, d]]$ stabiliser code,
\[
  k + 2(d-1) \le n.
\]
\end{theorem}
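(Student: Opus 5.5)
The plan is to reduce the theorem directly to Lemma~\ref{lem:two-disjoint} by exhibiting two disjoint correctable sets of size $d-1$ each. I will first dispose of the degenerate situation: if $2(d-1) > n$, I need a separate argument. In that regime I cannot choose two disjoint sets of size $d-1$. Instead I will choose $A$ with $|A| = d-1$ and $B = [n]\setminus A$ with $|B| = n-(d-1) < d-1$, so $C = \varnothing$. Both sets have size at most $d-1$, so both are correctable by Lemma~\ref{lem:dist-correct}. Lemma~\ref{lem:two-disjoint} then gives $k \le 0$, i.e.\ $k=0$.

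This exposes a subtlety in the conventions. When $k=0$ we have $S^\perp = S$, so $S^\perp\setminus S$ is empty and $d$ as defined in Definition~\ref{def:stabcode} is a minimum over the empty set. So I will state explicitly the standing assumption that an $[[n,k,d]]$ code has $k\ge 1$, or else that $d$ is some value for which the bound is asserted. Under $k \ge 1$, the degenerate case $2(d-1)>n$ yields the contradiction $k\le 0$, so it cannot occur. I also need $d-1 \le n$ for the set $A$ to exist, and this holds because a nonzero $v$ has $\wt(v)\le n$.

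In the main case $2(d-1)\le n$, I pick $A=\{1,\dots,d-1\}$, $B=\{d,\dots,2d-2\}$, and $C=[n]\setminus(A\cup B)$, so that $|C| = n-2(d-1)$. By Lemma~\ref{lem:dist-correct}, both $A$ and $B$ are correctable, being of size $d-1$. Lemma~\ref{lem:two-disjoint} then gives $k\le |C| = n-2(d-1)$, which rearranges to the claimed bound. For robustness, and to avoid natural-number subtraction issues as in a \texttt{Lean} setting, I would treat $d\ge 1$ separately from $d=0$. The case $d = 0$ cannot arise when $S^\perp \setminus S$ is nonempty, since nonzero vectors have positive weight; in any event $d-1$ would truncate to $0$ and the bound would reduce to $k\le n$, which follows from Lemma~\ref{lem:two-disjoint} with $A=B=\varnothing$.

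The main obstacle is not mathematical depth but the boundary bookkeeping: handling the empty-minimum convention for $d$ when $k=0$, and ensuring that the chosen sets fit inside $[n]$. The substantive content is already carried entirely by Lemmas~\ref{lem:dist-correct} and~\ref{lem:two-disjoint}.
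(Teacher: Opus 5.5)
Your argument is correct, and in the main case $2(d-1)\le n$ it is exactly the paper's proof. You take disjoint $A,B$ of size $d-1$, apply Lemma~\ref{lem:dist-correct} to each, and read off $k\le|C|=n-2(d-1)$ from Lemma~\ref{lem:two-disjoint}.

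The real difference is the regime $n<2(d-1)$. The paper's closing remark dismisses it by saying that $n-2(d-1)$ is non-positive and $k\ge0$, so the bound holds ``automatically''. That reasoning is backwards: a negative right-hand side makes $k\le n-2(d-1)$ false. What actually has to be shown is that this regime cannot occur. Your choice $B=[n]\setminus A$ makes both halves of a bipartition correctable, so Lemma~\ref{lem:two-disjoint} forces $k=0$ (equivalently, Lemma~\ref{lem:cleaning} with $g(A)=g(A^c)=0$). That supplies exactly the missing step.

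Your remark about $k=0$ is also right. Then $S^\perp=S$, so $d$ is a minimum over the empty set, and this is the only place where the statement needs a convention:
\begin{itemize}
\item under the standing assumption $k\ge1$ the case is excluded;
\item under the $\inf\emptyset=0$ convention with truncated subtraction, your $d=0$ observation disposes of it;
\item with $d=+\infty$ the bound would be false.
\end{itemize}
So your write-up is, if anything, more complete than the paper's.
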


\begin{proof}
Choose disjoint subsets $A, B \subseteq [n]$ with $|A| = |B| = d - 1$ and let
$C = [n] \setminus (A \cup B)$.  By Lemma~\ref{lem:dist-correct}, erasures of $A$ and $B$
are both correctable.  Lemma~\ref{lem:two-disjoint} then yields
\[
  k \le |C| = n - 2(d-1),
\]
which is equivalent to the stated bound.
\end{proof}

\begin{remark}[Existence of disjoint correctable sets]
In the proof above we choose disjoint subsets $A, B \subseteq [n]$ with $|A| = |B| = d - 1$.
Such a choice is possible whenever $n \ge 2(d-1)$.

If instead $n < 2(d-1)$, then no such disjoint subsets exist.  In this regime, the inequality
$k \le n - 2(d-1)$ has a non-positive right-hand side, and since $k \ge 0$ for any stabiliser
code, the bound is automatically satisfied.  Hence the argument above covers all cases.
\end{remark}

\section{Discussion}\label{sec:discussion}

We have derived the Quantum Singleton Bound from three ingredients: (i) the symplectic vector
space model of stabiliser codes, (ii) distance-based erasure correctability, and (iii) a
dimension-counting proof of the cleaning lemma following
Preskill~\cite{Preskill2022TopologicalStabilizerCodes}.  The final inequality follows from a
purely symplectic dimension argument using the correctability of two disjoint erasures.

\paragraph{Comparison with entropic proofs.}
The entropy-based proof of the Quantum Singleton
Bound~\cite{NielsenChuang,GrasslHuberWinter2022} proceeds by noting that if a code corrects
the erasure of a set~$E$, then the encoded information is fully determined by the
complementary qudits, which yields $S(A) \ge S(AB)$ for appropriate subsystem decompositions
(where $S$ denotes von~Neumann entropy).  The entropy approach is more general: it applies to
arbitrary quantum codes, not only stabiliser codes, and it naturally handles approximate error
correction and noisy settings.  However, for the stabiliser case, the additional generality is
not needed, and the symplectic proof makes the algebraic mechanism transparent.

\paragraph{Scope and limitations.}
The proof presented here applies to stabiliser codes over prime fields~$\F_p$.  The extension
to stabiliser codes over non-prime fields $\F_q = \F_{p^m}$ (using Hermitian or
trace-symplectic forms) is straightforward and follows the same outline, as the relevant
dimension identities remain valid in that
setting~\cite{KetkarKlappeneckerKumarSarvepalli2006,Ashikhmin2001}.  For general (possibly
non-additive) quantum codes, the entropic argument remains the most natural route.

\paragraph{The formalisation.}
The \texttt{Lean4} formalisation required several auxiliary developments not present in the
current \texttt{Mathlib} library, including the symplectic form as a bilinear form, properties
of support subspaces and restriction maps, and the interaction between symplectic orthogonal
complements and subspace intersections.  The most involved part of the formalisation was the
cleaning dimension identity (Lemma~\ref{lem:cleaning}), which required careful bookkeeping of
the direct-sum decomposition $S = S_M \oplus S_{M^c} \oplus S_0$.  The final theorem
(Theorem~\ref{thm:QSB}) itself was a relatively short consequence once the lemmas were in
place.  We hope that the auxiliary material developed here will be useful for future
formalisation efforts in quantum coding theory.


\bibliography{main}
\bibliographystyle{alpha}


\appendix
\section{Lean4 formalisation and code index}\label{app:lean}

\subsection{Repository, version, and permalinks}
The formalisation lives in the \texttt{tcslib} repository, in the file
\href{https://github.com/Shilun-Allan-Li/tcslib/blob/9c7f65d5b18b64bc56760a2a09dfe6fe3305b8d6/TCSlib/ErrorCorrectingCodes/QuantumSingleton.lean}{\texttt{TCSlib/ErrorCorrectingCodes/QuantumSingleton.lean}}.
All permalinks in the tables below are pinned to commit
\href{https://github.com/Shilun-Allan-Li/tcslib/tree/9c7f65d5b18b64bc56760a2a09dfe6fe3305b8d6}{\texttt{9c7f65d}},
which compiles against \texttt{Mathlib} revision \texttt{cd0d357} on the Lean toolchain
\texttt{leanprover/lean4:v4.25.0-rc2}.  The main file contains no \texttt{sorry}, \texttt{admit},
or introduced \texttt{axiom}s.

\subsection{Pointers to core definitions}
\begin{center}
\renewcommand{\arraystretch}{1.15}
\begin{tabular}{|p{0.54\textwidth}|p{0.40\textwidth}|}
\hline
\textbf{Mathematical object (paper)} & \textbf{Lean declaration (GitHub link)}\\
\hline
Prime field $\F=\F_p$ &
\href{https://github.com/Shilun-Allan-Li/tcslib/blob/9c7f65d5b18b64bc56760a2a09dfe6fe3305b8d6/TCSlib/ErrorCorrectingCodes/QuantumSingleton.lean#L72-L74}{\texttt{abbrev F}}\\
\hline
Ambient space $V\cong \F^{2n}$ &
\href{https://github.com/Shilun-Allan-Li/tcslib/blob/9c7f65d5b18b64bc56760a2a09dfe6fe3305b8d6/TCSlib/ErrorCorrectingCodes/QuantumSingleton.lean#L75-L77}{\texttt{abbrev V}}\\
\hline
Standard symplectic form $\langle\cdot,\cdot\rangle$ &
\href{https://github.com/Shilun-Allan-Li/tcslib/blob/9c7f65d5b18b64bc56760a2a09dfe6fe3305b8d6/TCSlib/ErrorCorrectingCodes/QuantumSingleton.lean#L78-L79}{\texttt{def sym\_form}}\\
\hline
Bilinear form packaging (as \texttt{LinearMap.BilinForm}) &
\href{https://github.com/Shilun-Allan-Li/tcslib/blob/9c7f65d5b18b64bc56760a2a09dfe6fe3305b8d6/TCSlib/ErrorCorrectingCodes/QuantumSingleton.lean#L190-L205}{\texttt{noncomputable def symB}}\\
\hline
Support and weight $\supp(\cdot),\ \wt(\cdot)$ &
\href{https://github.com/Shilun-Allan-Li/tcslib/blob/9c7f65d5b18b64bc56760a2a09dfe6fe3305b8d6/TCSlib/ErrorCorrectingCodes/QuantumSingleton.lean#L220-L226}{\texttt{def supp}, \texttt{def wt}}\\
\hline
Support subspace $V_C$ (vectors supported on $C$) &
\href{https://github.com/Shilun-Allan-Li/tcslib/blob/9c7f65d5b18b64bc56760a2a09dfe6fe3305b8d6/TCSlib/ErrorCorrectingCodes/QuantumSingleton.lean#L227-L243}{\texttt{def V\_sub}}\\
\hline
Dimension of a support subspace $\dim(V_C)=2|C|$ &
\href{https://github.com/Shilun-Allan-Li/tcslib/blob/9c7f65d5b18b64bc56760a2a09dfe6fe3305b8d6/TCSlib/ErrorCorrectingCodes/QuantumSingleton.lean#L316-L328}{\texttt{lemma dim\_V\_sub}}\\
\hline
Restriction map $r_E$ &
\href{https://github.com/Shilun-Allan-Li/tcslib/blob/9c7f65d5b18b64bc56760a2a09dfe6fe3305b8d6/TCSlib/ErrorCorrectingCodes/QuantumSingleton.lean#L329-L365}{\texttt{def r\_E}}\\
\hline
Symplectic orthogonal complement $S^\perp$ &
\href{https://github.com/Shilun-Allan-Li/tcslib/blob/9c7f65d5b18b64bc56760a2a09dfe6fe3305b8d6/TCSlib/ErrorCorrectingCodes/QuantumSingleton.lean#L366-L370}{\texttt{abbrev sym\_orth}}\\
\hline
Isotropic subspace predicate &
\href{https://github.com/Shilun-Allan-Li/tcslib/blob/9c7f65d5b18b64bc56760a2a09dfe6fe3305b8d6/TCSlib/ErrorCorrectingCodes/QuantumSingleton.lean#L371-L374}{\texttt{def IsIsotropic}}\\
\hline
Code distance (as an infimum over weights) &
\href{https://github.com/Shilun-Allan-Li/tcslib/blob/9c7f65d5b18b64bc56760a2a09dfe6fe3305b8d6/TCSlib/ErrorCorrectingCodes/QuantumSingleton.lean#L375-L383}{\texttt{noncomputable def code\_dist}}\\
\hline
Correctable erasure (commutant form) &
\href{https://github.com/Shilun-Allan-Li/tcslib/blob/9c7f65d5b18b64bc56760a2a09dfe6fe3305b8d6/TCSlib/ErrorCorrectingCodes/QuantumSingleton.lean#L384-L392}{\texttt{def correctable}}\\
\hline
Supportable logical operator count $g(M)$ &
\href{https://github.com/Shilun-Allan-Li/tcslib/blob/9c7f65d5b18b64bc56760a2a09dfe6fe3305b8d6/TCSlib/ErrorCorrectingCodes/QuantumSingleton.lean#L421-L437}{\texttt{noncomputable def g}}\\
\hline
Logical dimension parameter $k = n-\dim(S)$ &
\href{https://github.com/Shilun-Allan-Li/tcslib/blob/9c7f65d5b18b64bc56760a2a09dfe6fe3305b8d6/TCSlib/ErrorCorrectingCodes/QuantumSingleton.lean#L994-L1001}{\texttt{def code\_k}}\\
\hline
\end{tabular}
\end{center}

\subsection{Pointers to key lemmas and the main theorem}
\begin{center}
\renewcommand{\arraystretch}{1.15}
\begin{tabular}{|p{0.54\textwidth}|p{0.40\textwidth}|}
\hline
\textbf{Result (paper)} & \textbf{Lean lemma/theorem (GitHub link)}\\
\hline
Nondegeneracy of the symplectic form &
\href{https://github.com/Shilun-Allan-Li/tcslib/blob/9c7f65d5b18b64bc56760a2a09dfe6fe3305b8d6/TCSlib/ErrorCorrectingCodes/QuantumSingleton.lean#L206-L219}{\texttt{lemma sym\_form\_nondegenerate}}\\
\hline
Distance implies erasure correctability &
\href{https://github.com/Shilun-Allan-Li/tcslib/blob/9c7f65d5b18b64bc56760a2a09dfe6fe3305b8d6/TCSlib/ErrorCorrectingCodes/QuantumSingleton.lean#L393-L412}{\texttt{lemma dist\_implies\_correctable}}\\
\hline
Cleaning dimension identity (formal analogue of $g(M)+g(M^c)=2k$) &
\href{https://github.com/Shilun-Allan-Li/tcslib/blob/9c7f65d5b18b64bc56760a2a09dfe6fe3305b8d6/TCSlib/ErrorCorrectingCodes/QuantumSingleton.lean#L869-L882}{\texttt{lemma cleaning\_dimension\_identity}}\\
\hline
Two disjoint correctable sets bound logical dimension ($k\le |C|$) &
\href{https://github.com/Shilun-Allan-Li/tcslib/blob/9c7f65d5b18b64bc56760a2a09dfe6fe3305b8d6/TCSlib/ErrorCorrectingCodes/QuantumSingleton.lean#L1002-L1031}{\texttt{lemma two\_disjoint\_correctable\_sets
\_bound\_logical\_dimension}}\\
\hline
Existence of disjoint sets of prescribed size (helper for the final step) &
\href{https://github.com/Shilun-Allan-Li/tcslib/blob/9c7f65d5b18b64bc56760a2a09dfe6fe3305b8d6/TCSlib/ErrorCorrectingCodes/QuantumSingleton.lean#L1218-L1252}{\texttt{lemma exists\_disjoint\_finsets\_card}}\\
\hline
Quantum Singleton bound &
\href{https://github.com/Shilun-Allan-Li/tcslib/blob/9c7f65d5b18b64bc56760a2a09dfe6fe3305b8d6/TCSlib/ErrorCorrectingCodes/QuantumSingleton.lean#L1253-L1399}{\texttt{theorem quantum\_singleton\_bound}}\\
\hline
\end{tabular}
\end{center}

\end{document}